\newcommand{\code}[1]{\texttt{#1}}
\newcommand{\nat}{\mathbb{N}}
\newcommand{\real}{\mathbb{R}}
\newcommand{\domain}{\mathop{\rm dom}}
\newcommand{\interior}{\mathop{\rm int}}
\newcommand{\combset}{\mathcal{C}}
\newcommand{\simplex}[1]{\mathcal{P}^{#1}}
\newcommand{\diamcomb}{D_{\infty}}
\newcommand{\relaxset}[1]{\mathop{\rm relax}(#1)}
\newcommand{\relaxalg}{\mathcal{B}}
\newcommand{\vecc}{\bm{c}}
\newcommand{\vecell}{\bm{\ell}}
\newcommand{\email}[1]{\texttt{#1}}
\newcommand{\orcid}[2]{
    \href{https://orcid.org/#1}
    {\includegraphics[scale=.05]{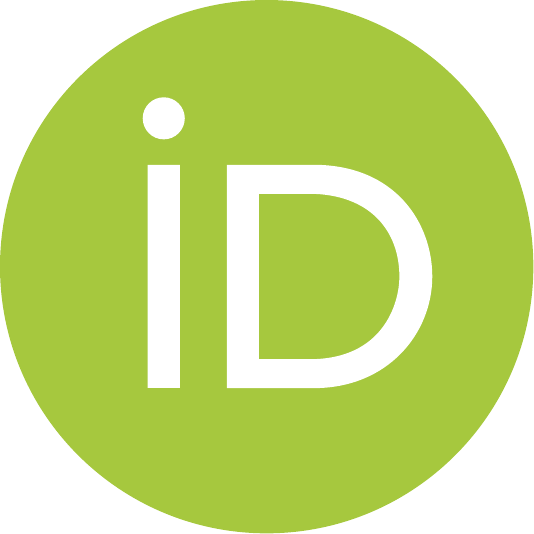} \hspace{1mm} #2}
}
\newtheorem{assumption}{Assumption}
\newtheorem{theorem}{Theorem}
\newtheorem{lemma}{Lemma}
\newtheorem{proposition}{Proposition}
\newtheorem{definition}{Definition}
\title{
    Online combinatorial linear optimization
    via a Frank-Wolfe based metarounding algorithm 
}
\author{
    \orcid{0000-0003-2277-6750}{Ryotaro Mitsuboshi} \\
    Kyushu University/RIKEN AIP \\
    \email{ryotaro.mitsuboshi@inf.kyushu-u.ac.jp} \\
    \And
    \orcid{0000-0002-1536-1269}{Kohei Hatano} \\
    Kyushu University/RIKEN AIP \\
    \email{hatano@inf.kyushu-u.ac.jp} \\
    \And
    \orcid{0000-0001-9542-2553}{Eiji Takimoto} \\
    Kyushu University \\
    \email{eiji@inf.kyushu-u.ac.jp}
}
\begin{document}
    \maketitle

    \begin{abstract}
        Metarounding is an approach to convert an approximation algorithm for linear optimization 
over some combinatorial classes to 
an online linear optimization algorithm for the same class.  
We propose a new metarounding algorithm
under a natural assumption that a relax-based approximation algorithm exists for the combinatorial class.
Our algorithm is much more efficient in both theoretical and practical aspects.

        \keywords{Metarounding \and Boosting \and Frank-Wolfe}
    \end{abstract}

    \section{Introduction}
Online decision-making problems using combinatorial objects arise in many applications, e.g., 
routing, advertisements, and resource allocation tasks. 
Examples of combinatorial objects include 
permutations~\cite{helmbold-warmuth:jmlr09},
$k$-sets~\cite{warmuth-kuzmin:jmlr08},
paths~\cite{gyorgy-etal:jmlr07},
matching~\cite{cesa-bianchi-lugosi:jcss12,helmbold-warmuth:jmlr09},
spanning trees~\cite{propp-wilson:ja98}
and so on.
%
%
%
%
%
%
%
%
%
%
%
%
Let $\combset\subset \nat^n_+$ be a combinatorial class, i.e., a set of combinatorial objects, 
where each combinatorial object is represented as a positive vector.
More formally, online combinatorial linear optimization over a combinatorial class $\combset$ 
is defined as a repeated game between 
a player and an environment. The protocol is the following: 
For each round $t = 1, 2, \dots$,
(i) the player chooses $\vecc_{t} \in \combset$.
Then, (ii) the environment reveals a loss vector $\vecell_{t} \in [0, 1]^{n}$ and 
(iii) the player incurs the loss $\vecc_t \cdot \vecell_t$.

The player's goal is to minimize the $\alpha$-regret defined as
\[
    R_{T}(\alpha) =
    \sum_{t=1}^{T} \bm{c}_{t} \cdot \bm{\ell}_{t}
    - \alpha \min_{\bm{c} \in \combset}
    \sum_{t=1}^{T} \bm{c} \cdot \bm{\ell}_{t},
\]
for as small $\alpha\geq 1$ as possible.

There are two approaches for online combinatorial linear optimization over $\combset$. 
The first approach is to design algorithms for individual classes~\cite{cesa-bianchi-lugosi:jcss12,helmbold-warmuth:jmlr09,warmuth-kuzmin:jmlr08,yasutake-etal:isaac11,yasutake-etal:cost11}.
The second approach is so called offline-to-online conversion, 
which uses an offline approximation algorithm over $\combset$
to construct an online algorithm over $\combset$ \cite{kakade+:sicomp09,garber:mor20,fujita+:alt13}.
An offline approximation algorithm takes $\bm{\ell}$ as input
and then outputs a vector $\bm{c} \in \combset$ satisfying
$
    \bm{c} \cdot \bm{\ell}
    \leq \alpha \min_{\bm{c}' \in \combset} \bm{c}' \cdot \bm{\ell}
$.
The offline-to-online conversion approach is generic in that 
once we design an efficient conversion algorithm, 
we can just use known approximation algorithms for different combinatorial classes.
For example, Garber proposed an algorithm
that calls the sub-routine $O(\ln T)$ times for each round $t$
and achieves $R_{T}(\alpha) = O(T^{2/3})$~\cite{garber:mor20}.
The paper also proposed 
an algorithm that calls the sub-routine $O(\sqrt{T} \ln T)$
per round and achieves $R_{T}(\alpha) = O(\sqrt{T})$.
The algorithm needs to know the number of total rounds $T$ and
the approximation ratio $\alpha$ a priori.
In addition, the number of sub-routine calls depends on $T$, so 
the conversion via the sub-routine may slow for a huge $T$.

\begin{figure}[t]
    \centering
    \includegraphics[keepaspectratio,scale=.45]{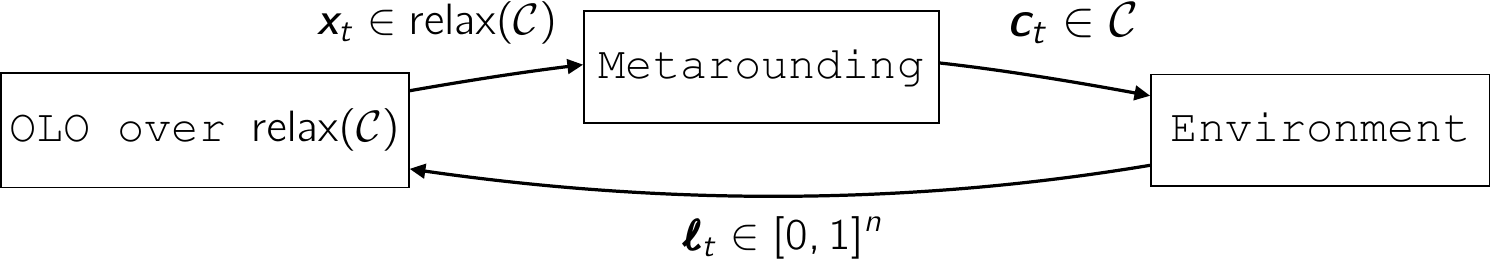}
    \caption{%
        A conceptual diagram for online combinatorial linear optimization %
        via metarounding. %
        In each round $t$, metarounding converts a vector $\bm{x}_{t}$ %
        into a combinatorial vector $\bm{c}_{t} \in \combset$. %
    }
    \label{fig:concept}
\end{figure}

Another offline-to-online conversion approach is to use metarounding\cite{carr-vempala:rsa02},
which uses a relaxation-based approximation algorithm. 
In fact, for many combinatorial classes $\combset$, 
there are relaxation-based approximation algorithms. 
Thus, the requirement is natural. 
Metarounding converts a vector $\bm{x} \in \relaxset{\combset}$ into
a combinatorial vector $\bm{c} \in \combset$,
where $\relaxset{\combset}$ is 
a convex set satisfying $\combset \subset \relaxset{\combset}$.
The relax-based approximation algorithms takes $\bm{\ell}$ as input
and then outputs a vector $\bm{c} \in \combset$ satisfying
$
    \bm{c} \cdot \bm{\ell}
    \leq \alpha \min_{\bm{p} \in \relaxset{\combset}} \bm{p} \cdot \bm{\ell}
$. The only difference between relaxation-based approximation 
and the approximation algorithm is 
the right-hand-side of the condition.
Using the relaxation-based approximation algorithm as sub-routine,
metarounding algorithms aim to output a vector $\bm{c} \in \combset$ randomly
to satisfy
$
    \mathbb{E}_{\bm{c}} [ \bm{c} ] \cdot \bm{\ell}
    \leq
    \alpha \bm{x} \cdot \bm{\ell}
$ for all $\bm{\ell}$.
Thus, combining the metarounding algorithm 
with an arbitrary online linear optimization algorithm 
over $\relaxset{\combset}$,
one can achieve 
$
    R_{T}(\alpha) = \alpha (1\text{-regret of the OLO algorithm})
$~\cite{fujita+:alt13}.
Thus, once an efficient metarounding algorithm is obtained,
we can fuse it to an OLO algorithm to obtain a better regret bound.
Figure~\ref{fig:concept} shows the concept of metarounding for
online combinatorial linear optimization over $\combset$.
Assuming the existence of such relax-based approximation algorithm,
Fujita et al.~\cite{fujita+:alt13} proposes a metarounding algorithm
that finds a distribution $\bm{\lambda} \in \simplex{\combset}$
satisfying $
    \mathbb{E}_{\bm{c}} [ \bm{c} ] \cdot \bm{\ell}
    \leq
    (\alpha + \epsilon) \bm{x} \cdot \bm{\ell}
$ for all $\bm{\ell}$
in $O(M^{2} \diamcomb^{2} n^{2} \ln(n) / \epsilon^{2})$
iterations, where $\diamcomb^{2}$ is
the maximal value of a combinatorial vector $\bm{c} \in \combset$
and $M$ is the input-dependent constant.

This paper proposes a new metarounding algorithm 
that finds a distribution $\bm{\lambda} \in \simplex{\combset}$ 
over $\combset$ that satisfies 
$
    \mathbb{E}_{\bm{c} \sim \bm{\lambda}} [ \bm{c} ] \cdot \bm{\ell}
    \leq (\alpha + \epsilon) \bm{x} \cdot \bm{\ell}
$ for all $\bm{\ell} \in [0, 1]^n$
in $O(\diamcomb^{2} \ln(n) / \epsilon^{2})$ rounds.
This guarantee is 
significantly better compared to the one by Fujita et al.~\cite{fujita+:alt13}.
Our algorithm is designed based on 
Frank-Wolfe algorithms~\cite{marguerite+:nrl56,jaggi:icml13}.
Table~\ref{tab:comparison} summarizes the previous works and our contribution.

Our technical contribution is two-fold. 
The first contribution is a re-formulation of the optimization problem for metarounding, 
based on a new observation (Proposition 1). 
This formulation enables us to derive algorithms with better iteration bounds. 
The second contribution is a new boosting-based metarounding algorithm. 
Our algorithm is similar to ERLPBoost\cite{warmuth+:alt08}, 
but ours employs a modified regularizer which is better suited for metarounding.
In fact, it can be viewed that our algorithm is a generalization of ERLPBoost for the approximation factor $\alpha>1$.
Furthermore, our analyses are based on those for Frank-Wolfe, 
which are completely different from those for ERLPBoost. 
As a result, our result significantly improves Fujita et al's 
on the number of approximation oracle calls per trial 
from $O(\diamcomb^{2} n^{2} \ln (n) / \epsilon^{2})$ 
to $O(\diamcomb^{2} \ln(n) / \epsilon^{2})$.
We also observe its significant improvements in the preliminary experiments.
\begin{table}[h]
    \centering
    \caption{%
        Comparison of the previous works. %
        Our assumption is the same as %
        the one in Fujita et al.~\cite{fujita+:alt13}. %
        Note that the work by Fujita et al. and ours aim to optimize %
        $(\alpha + \epsilon)$-regret $R_T(\alpha + \epsilon)$ %
        for arbitrarily small $\epsilon > 0$. %
    }
    \begin{tabular}{lclc}
        \toprule
        Algorithm
                  & \qquad
                    \begin{tabular}{c}
                      Approx. alg. \\
                      oracle
                    \end{tabular}
                  & \qquad
                    \begin{tabular}{c}
                      \# of oracle calls \\
                      per trial
                    \end{tabular}
                  & Regret \\
        \midrule
        Kalai et al.~\cite{kakade+:sicomp09}
                  & \qquad Any
                  & \qquad$O(T)$
                  & $O(T^{1/2})$ \\
        Garber~\cite{garber:mor20}
                  & \qquad Any
                  & \qquad$O(n^{2} \ln T)$
                  & $O(T^{2/3})$ \\
        Garber~\cite{garber:mor20}
                  & \qquad Any
                  & \qquad$O(n^{2} \sqrt{T} \ln T)$
                  & $O(T^{1/2})$ \\
        Fujita et al.~\cite{fujita+:alt13}
                  & \qquad relax-based 
                  & \qquad$
                        O(
                            \diamcomb^{2}
                            n^{2} \ln(n) / \epsilon^{2}
                        )
                    $ \qquad \qquad
                  & $O(T^{1/2})$ \\
        \textbf{This work}
                  & \qquad relax-based 
                  & \qquad$O(\diamcomb^{2} \ln(n) / \epsilon^{2})$
                  & $O(T^{1/2})$ \\
        \bottomrule
    \end{tabular}
    \label{tab:comparison}
\end{table}

    \section{Preliminary}
Throughout this paper,
we use $\combset \subset \nat^{n}$ to represent the combinatorial set 
of our interest and let
$\diamcomb = \max_{\bm{c} \in \combset} \| \bm{c} \|_{\infty}$
as the maximal value of the entry of $\bm{c}$.
We use the notion $\bm{e}_{\bm{c}}$ to denote
the canonical basis vector over $\real^{\combset}$.
We denote $
    \simplex{n}
    = \{ \bm{p} \in [0, 1]^n \mid \| \bm{p} \|_{1} = 1 \}
$ as the probability simplex over $\real^{n}$.
For abbreviation, we often writes $\simplex{\combset}$ to represent
the probability simplex over a set $\combset$.
Given a set $\combset \subset \{0, 1\}^n$,
To introduce metarounding,
We also define 
the barrier function $I_{\combset} : \real^{n} \to \{0, \infty\}$
over a set $\combset \subset \real^{n}$ such that
$\bm{c} \in \combset$ iff $I_{\combset} (\bm{c}) = 0$.

\begin{definition}[Relax-based $\alpha$-approximation algorithm]
    Algorithm $\relaxalg$ is said to be
    a relax-based $\alpha$-approximation algorithm for $\combset$
    if $\relaxalg$ 
    takes $\bm{\ell} \in \real_{+}^{n}$ as input and then outputs
    a vector $\bm{c} \in \combset$ satisfying
    \begin{align*}
        \bm{c} \cdot \bm{\ell}
        \leq \alpha
        \min_{\bm{p} \in \relaxset{\combset}} \bm{p} \cdot \bm{\ell},
    \end{align*}
    where $\relaxset{\combset}$ is a convex set sastisfying
    $\combset \subset \relaxset{\combset}$.
\end{definition}

\begin{assumption}
    We assume the existence of a polynomial time 
    relax-based $\alpha$-approximation algorithm $\relaxalg$ 
    for $\combset$ with a relaxation set $\relaxset{\combset}$.
\end{assumption}

\begin{definition}[$\alpha$-metarounding]
    Algorithm $\mathcal{A}$ is said to be
    an $\alpha$-metarounding algorithm for $(\combset, \relaxset{\combset})$
    if $\mathcal{A}$, when given
    $\bm{x} \in \relaxset{\combset}$ as input,
    outputs a vector $\bm{c} \in \combset$ randomly
    such that for all $\bm{\ell} \in \real_{+}^{n}$,
    $
        \mathbb{E}_{\bm{c}} [ \bm{c} \cdot \bm{\ell} ]
        \leq
        \alpha \bm{x} \cdot \bm{\ell}
    $.
\end{definition}
The existence of a relax-based $\alpha$-approximation algorithm
implies the existence of an $\alpha$-metarounding algorithm~\cite{carr+:jrsa02}.

A function $f$ is said to be $\eta$-strongly convex
w.r.t. the norm $\| \cdot \|$ if
$
    f(\bm{y}) \geq f(\bm{x}) + (\bm{y} - \bm{x}) \cdot \nabla f(\bm{x})
    + \frac{\eta}{2} \| \bm{y} - \bm{x} \|^2
$
holds for all $\bm{x}, \bm{y} \in \domain f$.
Similarly, a function $f$ is said to be $\zeta$-smooth
w.r.t. the norm $\| \cdot \|$ if
$
    f(\bm{y}) \leq f(\bm{x}) + (\bm{y} - \bm{x}) \cdot \nabla f(\bm{x})
    + \frac{\zeta}{2} \| \bm{y} - \bm{x} \|^2
$
holds for all $\bm{x}, \bm{y} \in \domain f$.

Our analysis uses the Fenchel duality,
so we introduce the Fenchel conjugate function.
\begin{definition}[Fenchel conjugate]
    The Fenchel conjugate $f^\star$ of 
    a function $f : \mathbb{R}^k \to [-\infty, +\infty]$ is defined as
    \begin{align*}
        f^\star(\bm{\theta}) = \sup_{\bm{\mu} \in \mathbb{R}^{k}}
        \left( \bm{\mu} \cdot \bm{\theta} - f(\bm{\mu}) \right).
    \end{align*}
\end{definition}
The Fenchel conjugate of the max function $\bm{x} \mapsto \max_{i} x_{i}$ is
a barrier function of the probability simplex.
\begin{align*}
    f(\bm{x}) = \max_{i \in [k]} x_i
    \quad
    \iff
    \quad
    f^{\star}(\bm{y}) = I_{\simplex{k}}(\bm{y})
    = \begin{cases}
        0 & \bm{y} \in \simplex{k}, \\
        +\infty & \bm{y} \notin \simplex{k}.
    \end{cases}
\end{align*}
It is well known that if $f$ is a $1/\eta$-strongly convex function 
w.r.t. the norm $\|\cdot\|$ for some $\eta > 0$, 
$f^\star$ is an $\eta$-smooth function
w.r.t. the dual norm $\|\cdot\|_{\star}$. 
Further, if $f$ is a strongly convex function, 
the gradient vector of $f^\star$ is written as
$
    \nabla f^\star(\bm{\theta}) = 
    \arg \sup_{\bm{\mu} \in \mathbb{R}^{k}}
    \left( \bm{\mu} \cdot \bm{\theta} - f(\bm{\mu}) \right)
$.
One can find the proof of these properties 
here~\cite{borwein+:springer06,shalev-shwartz+:jml10}.
Now, we show the duality theorem~\cite{borwein+:springer06}.
\begin{theorem}[Fenchel duality theorem~\cite{borwein+:springer06}]
    \label{thm:strong_duality}
    Let $f : \mathbb{R}^{k} \to (-\infty, +\infty]$ and 
    $g : \mathbb{R}^{n} \to (-\infty, +\infty]$ be convex functions, 
    and a linear map $A : \mathbb{R}^{n} \to \mathbb{R}^{k}$. 
    Define the Fenchel problems
    \begin{align}
        \label{eq:strong_duality}
        \gamma = \inf_{\bm{d}} f(\bm{d}) + g(A^\top\bm{d}), 
        \qquad
        \rho   = \sup_{\bm{w}} -f^\star(-A \bm{w}) - g^\star(\bm{w}).
    \end{align}
    Then, $\gamma \geq \rho$ holds. 
    Further, $\gamma = \rho$ holds if\footnote{%
        For a convex set $\mathcal{S} \subset \mathbb{R}^{n}$, %
        $
            \interior(\mathcal{S}) = \{
                \bm{w} \in \mathcal{S} \mid
                \forall \bm{v} \in \mathbb{R}^{n},
                    \exists t > 0,
                    \forall \tau \in [0, t],
                    \bm{w} + \tau \bm{v} \in \mathcal{S}
            \}
        $ and 
        $
            \domain g - A^\top \domain f
            = \{ \bm{w} - A^\top \bm{d} \mid
                \bm{w} \in \domain g,
                \bm{d} \in \domain f
            \}
        $. %
    }
    $\bm{0} \in \interior \left(\domain g - A^\top \domain f\right)$. 
    Furthermore, points $\bar{\bm{d}} \in \simplex{m}_{\nu}$ and 
    $\bar{\bm{w}} \in \simplex{n}$ are optimal solutions 
    for problems in~(\ref{eq:strong_duality}) respectively 
    if and only if $-A \bar{\bm{w}} \in \partial f(\bar{\bm{d}})$ 
    and $\bar{\bm{w}} \in \partial g(A^\top \bar{\bm{d}})$.
\end{theorem}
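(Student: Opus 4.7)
The plan is the classical Fenchel--Rockafellar argument, which splits cleanly into three pieces: weak duality, strong duality under the constraint qualification, and the subgradient characterization of the optimal pair. Since the only tools available at this point in the paper are the definition of the Fenchel conjugate and basic convex analysis, this is exactly the natural route.

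\textbf{Weak duality.} For any $\bm{d}$ and $\bm{w}$, applying the Fenchel--Young inequality (which is built into the definition of the conjugate) to each of $f$ and $g$ separately gives
\begin{align*}
    f(\bm{d}) \geq -\bm{d} \cdot A\bm{w} - f^\star(-A\bm{w}),
    \qquad
    g(A^\top \bm{d}) \geq \bm{d} \cdot A\bm{w} - g^\star(\bm{w}).
\end{align*}
Summing cancels the bilinear term and yields $f(\bm{d}) + g(A^\top \bm{d}) \geq -f^\star(-A\bm{w}) - g^\star(\bm{w})$. Taking the infimum on the left in $\bm{d}$ and the supremum on the right in $\bm{w}$ produces $\gamma \geq \rho$ with no further work.

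\textbf{Strong duality.} This is the main obstacle and is where the interior assumption is genuinely needed. I would introduce the perturbation function $v(\bm{u}) := \inf_{\bm{d}}\{ f(\bm{d}) + g(A^\top \bm{d} + \bm{u}) \}$, which is convex as the marginal of the jointly convex function $(\bm{d}, \bm{u}) \mapsto f(\bm{d}) + g(A^\top \bm{d} + \bm{u})$, and satisfies $v(\bm{0}) = \gamma$. The hypothesis $\bm{0} \in \interior(\domain g - A^\top \domain f)$ translates directly to $\bm{0} \in \interior(\domain v)$, and any convex function is subdifferentiable at an interior point of its domain. Picking $\bm{w} \in \partial v(\bm{0})$ and unfolding the inequality $v(\bm{u}) \geq v(\bm{0}) - \bm{w} \cdot \bm{u}$ term by term, one identifies the infimum over $\bm{u}$ of $g(A^\top \bm{d} + \bm{u}) + \bm{w} \cdot \bm{u}$ with $-g^\star(\bm{w}) + \bm{w} \cdot A^\top \bm{d}$, and then the infimum over $\bm{d}$ with $-f^\star(-A\bm{w})$. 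This exhibits $\bm{w}$ as a dual feasible point attaining $\gamma$, so $\gamma \leq \rho$, and together with weak duality gives $\gamma = \rho$ together with attainment of the sup. Justifying that the perturbation function is proper and that the domain/interior identity really holds is the main technical step.

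\textbf{Optimality conditions.} With strong duality in hand, a pair $(\bar{\bm{d}}, \bar{\bm{w}})$ is optimal iff the weak--duality chain collapses to equality, and by inspection this forces equality in both Fenchel--Young inequalities used above. Equality in $f(\bar{\bm{d}}) \geq -\bar{\bm{d}} \cdot A\bar{\bm{w}} - f^\star(-A\bar{\bm{w}})$ is, by standard convex analysis, equivalent to $-A\bar{\bm{w}} \in \partial f(\bar{\bm{d}})$, and the analogous statement for $g$ gives $\bar{\bm{w}} \in \partial g(A^\top \bar{\bm{d}})$. The reverse implication runs identically: these two subdifferential inclusions turn the two Fenchel--Young bounds into equalities and therefore force $f(\bar{\bm{d}}) + g(A^\top \bar{\bm{d}}) = -f^\star(-A\bar{\bm{w}}) - g^\star(\bar{\bm{w}}) = \gamma = \rho$, so both points are optimal.
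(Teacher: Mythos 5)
The paper does not actually prove this theorem: it is stated as a known result and attributed to Borwein and Lewis~\cite{borwein+:springer06}, so there is no in-paper argument to compare against. Your proposal is the standard Fenchel--Rockafellar proof via the perturbation function $v(\bm{u}) = \inf_{\bm{d}}\{f(\bm{d}) + g(A^\top\bm{d} + \bm{u})\}$, and the overall structure (Fenchel--Young for weak duality, subgradient of $v$ at $\bm{0}$ under the interior qualification for strong duality, and collapse of the Fenchel--Young chain for the optimality conditions) is exactly right. One small bookkeeping slip: you write the subgradient inequality as $v(\bm{u}) \geq v(\bm{0}) - \bm{w}\cdot\bm{u}$ but then compute the partial infimum $\inf_{\bm{u}}\bigl(g(A^\top\bm{d}+\bm{u}) - \bm{w}\cdot\bm{u}\bigr) = \bm{w}\cdot A^\top\bm{d} - g^\star(\bm{w})$, which is the formula corresponding to the convention $v(\bm{u}) \geq v(\bm{0}) + \bm{w}\cdot\bm{u}$; the two conventions differ only by renaming $\bm{w}\mapsto-\bm{w}$, so the conclusion $\gamma \leq -f^\star(-A\bm{w}) - g^\star(\bm{w})$ still comes out correctly, but the intermediate lines as written are not self-consistent. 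You also rightly flag that subdifferentiability of $v$ at $\bm{0}$ needs $v$ to be proper; the clean way to dispatch that is to note that if $v$ takes $-\infty$ anywhere on the interior of its domain then $\gamma = -\infty$ and weak duality forces $\rho = -\infty$, so strong duality holds vacuously, and otherwise properness on $\interior(\domain v)$ follows from convexity.
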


    \section{Related Work}
Previous work for this setting~\cite{fujita+:alt13}
tries to solve the problem with a boosting-based approach
designed based on SoftBoost~\cite{warmuth+:nips07}.
The resulting algorithm named \code{MbB} 
(the shorthand for ``Metarounding-by-Boosting'')
is guaranteed to terminate 
in $O\left( M^{2} \diamcomb^{2} n^{2} \ln (n) / \epsilon^{2} \right)$ 
iterations,
where $M$ is the input-dependent constant.
They showed the efficiency of \code{MbB} by
numerical experiments against the ellipsoid method~\cite{carr+:jrsa02}.
There are two main advantages for \code{MbB}.
The first is that \code{MbB} does not need to know
the approximation ratio $\alpha$ a priori.
The other is that the iteration bound for \code{MbB} 
does not depend on the iteration $T$ of online algorithms.
Previous works take $O(\ln T)$~\cite{garber:mor20} 
to $O(T)$~\cite{carr+:jrsa02} iterations for
every round of online combinatorial optimization,
where $T$ is the number of total rounds for the online algorithm.

\subsection{Margin optimization boosting algorithms}
Boosting is a common machine learning technique that
combines multiple low-accuracy predictors to yield a highly accurate one.
Since the predictors to be combined are chosen from
a set of exponentially many predictors,
one cannot compute the combined predictor directory.
The main idea of boosting is to collect the most promising predictors
one by one until some condition is satisfied.

Gradient Boosting Machines~\cite{friedman:ann.stat.01},
such as XGBoost~\cite{tianqi+:kdd16} and LightGBM~\cite{ke+:nips17}, are
popular algorithms that minimize some loss functions.
For binary classification,
it is well known that a predictor
that maximizes the quantity so-called ``margin''
instead of minimizing loss functions
is known to have smaller losses for unseen data~\cite{schapire+:as98}.
Thus, some algorithms,
such as LPBoost~\cite{demiriz+:ml02},
SoftBoost~\cite{warmuth+:nips07}, and
ERLPBoost~\cite{warmuth+:alt08},
are proposed to optimize the margin. 

\subsection{Frank-Wolfe algorithms}
Frank-Wolfe algorithms (FWs for shorthand),
a.k.a. conditional gradient algorithms,
are the class of first-order optimization algorithms
for convex objectives over convex sets.
FW was originally invented by Frank \& Wolfe~\cite{marguerite+:nrl56} and then 
studied extensively~\cite{jaggi:icml13,pedregosa+:aistats20,tsuji+:icml22}.
FW algorithms are known to converge in $O(\eta / \epsilon)$ iterations
if the objective function is $\eta$-smooth w.r.t. some norm.

    \begin{algorithm}[t]
    \caption{Metarounding by ERLPBoost}
    \label{alg:metarounding-erlpboost}
    \begin{algorithmic}[1]
        \REQUIRE{%
            $\bm{x} \in \relaxset{\combset}$, %
            accuracy $\epsilon > 0$, %
            and a relax-based $\alpha$-approx. 
            algorithm $\relaxalg$.%
        }
        \STATE{%
            Initialize $
                \bar{\bm{\ell}}_0
                = \frac{1}{n} \bm{x}, \bm{\ell}_{0} = \frac{1}{n} \bm{1}
            $ and set $H^{\star}(\bar{C}^{(0)} \bm{\lambda}_{0}) = \infty$, %
            $\eta = 2 \ln(n) / \epsilon$.
        }
        \FOR{$k = 0, 1, 2, \dots, K$}
            \STATE{%
                Call the approximation algorithm to obtain %
                $\bm{c}_{k+1} \in \relaxalg(\bm{\ell}_{k})$. \\
                Let $\bar{\bm{c}}_{k+1} = (c_{1}/M, c_{2}/M, \dots, c_{n}/M)$. %
            }
            \IF{%
                $
                    \epsilon_{k}
                    :=
                    H^{\star}(\bar{C}^{(k)} \bm{\lambda}_{k})
                    - \max_{j \leq k}
                    \bar{\bm{c}}_{j+1} \cdot \bar{\bm{\ell}}_{j}
                    \leq \epsilon / 2
                $
                \label{alg-line:epsilon}
            }
                \STATE{Set $K = k$. \textbf{break.}}
            \ENDIF
            \STATE{%
                Set
                $
                    \bm{\lambda}_{k+1}
                    \gets
                    \arg \min_{\bm{\lambda} \in \simplex{k+1}}
                    H^{\star} (\bar{C}^{(k+1)} \bm{\lambda})
                $, where %
                $
                    \bar{C}^{(k+1)} =
                    \begin{bmatrix}
                        \bar{\bm{c}}_{1} &
                        \bar{\bm{c}}_{2} &
                        \dots &
                        \bar{\bm{c}}_{k+1}
                    \end{bmatrix}
                $.
            }
            \STATE{%
                Set %
                $
                    \bar{\bm{\ell}}_{k+1}
                    = \nabla H^{\star} (\bar{C}^{(k+1)} \bm{\lambda}_{k+1})
                    = \arg \max_{\bar{\bm{\ell}} \in \simplex{n}}
                    \bar{\bm{\ell}}^\top \bar{C}^{(k+1)} \bm{\lambda}_{k+1}
                    - \frac{1}{\eta} \Delta (\bar{\bm{\ell}})
                    $ and $
                    \bm{\ell}_{k+1}
                    = (\bar{\ell}_{k+1,1}/x_{1},
                    \bar{\ell}_{k+1,2}/x_{2},
                    \dots,
                    \bar{\ell}_{k+1,n}/x_{n})
                $%
            }
        \ENDFOR
        \STATE{%
            Solve Problem (\ref{eq:main-primal}) over
            $\bm{c}_{1}, \bm{c}_{2}, \dots, \bm{c}_{K+1}$ and %
            let the optimal solution be $\bm{\lambda}^{\star}$. %
        }
        \ENSURE{$\bm{\lambda}^{\star} \in \simplex{\combset}$}
    \end{algorithmic}
\end{algorithm}

\section{Main idea}
First, we revisit the approach by Fujita et al.~\cite{fujita+:alt13}.
In \code{MbB}, they formulate an $\alpha$-metarounding
as an algorithm that solves the following optimization problem.
\begin{align}
    \label{eq:main-primal}
    \min_{\beta, \bm{\lambda}} \; \beta
    \qquad
    \text{s.t.} \quad
    & \sum_{\bm{c} \in \combset} \lambda_{\bm{c}} c_{i} \leq \beta x_{i},
        \quad \forall i \in [n], \\
    \nonumber
    & \bm{\lambda} \in \simplex{\combset}.
\end{align}
The Lagrange dual problem for Problem (\ref{eq:main-primal}) is
\begin{align}
    \label{eq:main-dual}
    \max_{\gamma, \bm{\ell}} \; \gamma
    \qquad
    \text{s.t.} \quad
    &
    \bm{\ell} \cdot \bm{c} \geq \gamma,
    \quad \forall \bm{c} \in \combset, \\
    \nonumber
    & \bm{\ell} \cdot \bm{x} = 1, \quad
    \bm{\ell} \geq \bm{0}.
\end{align}
Problem (\ref{eq:main-dual}) has $O(|\combset|)$ constraints,
so \code{MbB} uses the column-generation approach 
like soft margin boosting algorithms;
That is, starting from $\tilde{\combset} = \emptyset$,
they solve a problem under the constraints in (\ref{eq:main-dual})
over $\tilde{\combset}$ and then
append a new combinatorial vector $\bm{c} \in \combset$
to $\tilde{\combset}$ one by one until some conditions are met.
One can see that Problem (\ref{eq:main-dual}) can be seen
as so-called ``Edge minimization'' in boosting 
(See, e.g.,~\cite{demiriz+:ml02,warmuth+:nips07}).
In fact, setting $\bm{x} = \bm{1}$ in Problem (\ref{eq:main-dual}) corresponds
to margin optimization.
With this insight, \code{MbB} is proposed based on a boosting algorithm
named SoftBoost~\cite{warmuth+:nips07}, but the resulting guarantee is
not as good as the one in SoftBoost.

We want to use the state-of-the-art soft margin optimization
named ERLPBoost~\cite{warmuth+:alt08} to solve Problem~(\ref{eq:main-dual}),
which uses a normalized relative entropy regularizer,
but there are some issues.
First, 
constraint $\bm{\ell} \cdot \bm{x} = 1$ makes the direct application hard.
ERLPBoost succeeded because its feasible region is probability simplex,
while Problem (\ref{eq:main-primal}) is not.
One naive idea is to use unnormalized relative entropy 
instead of the normalized one.
This approach results in a similar result to Fujita et al. 
since the variable $\bm{\ell}$ is unbounded.
With these issues, we use a different approach,
a Frank-Wolfe-based approach similar to 
the one in Mitsuboshi et al.~\cite{mitsuboshi+:arxiv22}.
First, we convert Problem (\ref{eq:main-dual}) 
to an equivalent bounded problem and then use an ERLPBoost-like problem.
The resulting algorithm is a generalization of the soft margin boosting
in the following sense:
Our algorithm becomes ERLPBoost if we set $\bm{x} = \bm{1}$.


\subsection{Bounding the dual variable.}
Before getting into our main result, we prove the following proposition.
\begin{proposition}
    \label{prop:variable-bound}
    Let $M := \max \{ \frac{1}{x_{i}} \mid x_{i} \neq 0, i \in [n] \}$.
    There is an optimal solution $(\gamma^{\star}, \bm{\ell}^\star)$
    to Problem (\ref{eq:main-dual}) satisfying $\ell^{\star}_{i} \leq M$
    for all $i \in [n]$.
\end{proposition}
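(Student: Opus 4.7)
The plan is a one-line LP feasibility argument. For any feasible $\bm{\ell}$ in Problem~(\ref{eq:main-dual}), nonnegativity $\bm{\ell} \geq \bm{0}$ together with the normalization $\bm{\ell}\cdot\bm{x}=1$ immediately yields the coordinatewise bound $\ell_i x_i \leq \sum_j \ell_j x_j = 1$. For every index $i$ with $x_i \neq 0$ this rearranges to $\ell_i \leq 1/x_i \leq M$. Since the bound holds for \emph{every} feasible $\bm{\ell}$, it holds a fortiori for every optimal one, which settles the claim on the support of $\bm{x}$ without any modification.

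The only remaining concern is coordinates $i \in S := \{i : x_i = 0\}$, where $\ell_i$ is unconstrained by the normalization. The plan here is to start from an arbitrary optimum $(\gamma^\star, \bm{\ell}^\star)$ and show that its entries on $S$ can be reset to values in $[0, M]$ while preserving optimality. The leverage comes from the primal side: for $i \in S$, the primal inequality $\sum_c \lambda_c c_i \leq \beta^\star x_i = 0$ together with $c_i \in \mathbb{N}$ and $\lambda_c \geq 0$ forces $\lambda^\star_c = 0$ on every $\bm{c}$ with $c_i > 0$, so any primal optimizer is supported in $\combset_0 := \{\bm{c}\in\combset : c_i = 0\ \forall i \in S\}$. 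By strong duality (Theorem~\ref{thm:strong_duality}) the full dual value equals the value of the dual of the LP restricted to $\combset_0$, whose feasible region is $\{\bm{\ell}_{[n]\setminus S} \geq \bm{0} : \bm{\ell}_{[n]\setminus S} \cdot \bm{x}_{[n]\setminus S} = 1\}$; its optimum therefore already obeys $\ell^\star_j \leq 1/x_j \leq M$ on $[n]\setminus S$, and I then extend by an arbitrary choice $\ell_i^\star \in [0, M]$ for $i \in S$.

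The main obstacle I expect is verifying that this extension preserves the full set of dual inequalities $\bm{\ell}^\star \cdot \bm{c} \geq \gamma^\star$, including those for $\bm{c} \notin \combset_0$ (which are the only constraints the entries on $S$ can enter). Here I would appeal to complementary slackness: because $\lambda^\star_c = 0$ at such $\bm{c}$, the corresponding dual inequality need not be tight and so can only be violated if the restricted-dual value strictly exceeds the full-dual value, which strong duality rules out. This is the one step of the proof where more than direct manipulation of the two constraints of Problem~(\ref{eq:main-dual}) is needed; the rest is bookkeeping.
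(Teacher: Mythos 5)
Your argument follows essentially the same route as the paper's, with one cosmetic improvement and one genuine weak point that the paper shares.

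For coordinates with $x_i \neq 0$, your direct bound $\ell_i x_i \leq \bm{\ell}\cdot\bm{x} = 1$ is cleaner than the paper's proof-by-contradiction and proves the stronger statement that \emph{every} feasible dual point obeys the bound on the support of $\bm{x}$; this is a genuine (if minor) simplification.

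For coordinates in $S = \{i : x_i = 0\}$, both you and the paper use the same primal observation ($\lambda^\star_{\bm{c}} > 0$ forces $c_i = 0$) to pass to the restricted class $\combset_0$ and then reset $\ell_i^\star$ to a value in $[0, M]$. The step you flag — that the reset $\bm{\ell}^\star$ must remain feasible for \emph{every} $\bm{c} \in \combset$, including those outside $\combset_0$ — is indeed a real gap, and the complementary-slackness argument you sketch does not close it. Equality of the restricted-dual and full-dual optimal values is a statement about optimal \emph{values}, not about any particular optimizer: a restricted-dual optimum can certainly be infeasible for the full dual while the two LPs share the same optimum value, so nothing in strong duality prevents $\bm{\ell}^\star\cdot\bm{c} < \gamma^\star$ for some $\bm{c} \notin \combset_0$ after the reset. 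The paper's own proof has exactly the same lacuna (it simply asserts ``we can set $\ell_i^\star = M$ without losing the optimality'' after restricting to $\combset^\star$, never checking the constraints for $\bm{c}\notin\combset^\star$), so you have not done worse — you have merely been more honest about where the difficulty lies. The gap can be closed, in the setting $\combset \subset \{0,1\}^n$ used by the paper, by observing that $\gamma^\star = \beta^\star \leq M$: picking any $\bm{c}^\ast\in\combset_0$ as a point-mass primal solution gives $\beta \leq \max_{i:x_i\neq 0,\,c^\ast_i=1} 1/x_i \leq M$. Then for any $\bm{c}\notin\combset_0$ there is some $i\in S$ with $c_i=1$, so after the reset $\bm{\ell}^\star\cdot\bm{c}\geq \ell^\star_i = M \geq \gamma^\star$, and feasibility holds. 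Adding that one observation would turn both your sketch and the paper's proof into a complete argument.
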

\begin{proof}
    Let $(\beta^{\star}, \bm{\lambda}^{\star})$,
    $(\gamma^{\star}, \bm{\ell}^{\star})$ be optimal solutions to
    Problem~(\ref{eq:main-primal}),~(\ref{eq:main-dual}), respectively.
    Without loss of generality, we can restrict the combinatorial set to
    $\combset^{\star} := \{ \bm{c} \in \combset \mid \lambda_{\bm{c}} > 0 \}$.
    We consider the following two cases.
    \begin{enumerate}
        \item $\forall i \in [n], x_{i} \neq 0$. This case,
            \[
                \bm{\ell}^{\star} \cdot \bm{x}
                > \sum_{j \neq i} \ell^{\star}_{j} x_{j} + M x_{i}
                > \sum_{j \neq i} \ell^{\star}_{j} x_{j}
                    + M \min_{k \in [n]} x_{k}
                \geq 1,
            \]
            which contradicts to
            the constraint $\bm{\ell}^{\star} \cdot \bm{x} = 1$,
            so that $\ell^{\star}_{i} \leq M$ for all $i \in [n]$.
        \item $\exists i \in [n], x_{i} = 0$. 
            Since $(\beta^{\star}, \bm{\lambda}^{\star})$ is a feasible solution,
            $
                \sum_{\bm{c} \in \combset^{\star}}
                \lambda^{\star}_{\bm{c}} c_{i}
                \leq \beta^{\star} x_{i} = 0
            $, which implies $\lambda^{\star}_{\bm{c}} = 0$ or $\bm{c}_{i} = 0$
            for all $\bm{c} \in \combset^{\star}$.
            By definition of $\combset^{\star}$, $\lambda^{\star}_{\bm{c}} > 0$
            for all $\bm{c} \in \combset^{\star}$, so
            we have $c_{i} = 0$ for all $\bm{c} \in \combset^{\star}$.
            Thus, $\ell^{\star}_{i} c_{i} = 0$ does not affect to
            the objective value $\gamma^{\star}$.
            So we can set $\ell^{\star}_{i} = 0$ without losing
            the optimality.
    \end{enumerate}
\end{proof}
As discussed in the above proof, 
the variables $\{\ell_{i} \mid x_{i} = 0, i \in [n] \}$ do not contribute to
the optimal value, so we assume that $x_{i} > 0$ for all $i \in [n]$
without loss of generality.
Furthermore, restricting $\bm{\ell} \in [0, M]^{n}$ does not affect
the optimal solution to Problem~(\ref{eq:main-dual}),
so we consider the following problem
\begin{align}
    \label{eq:main-dual-02}
    \max_{\gamma, \bm{\ell}} \; \gamma
    \qquad
    \text{s.t.} \quad
    &
    \bm{\ell} \cdot \bm{x} = 1, \quad
    \bm{\ell} \cdot \bm{c} \geq \gamma,
    \quad \forall \bm{c} \in \combset, 
    \quad \bm{0} \leq \bm{\ell} \leq \bm{M},
\end{align}
where $M$ is the value defined in Proposition~\ref{prop:variable-bound}.
To clarity the analysis, we rewrite the above problem by introducing
$\bar{\bm{\ell}} = (\ell_{1} x_{1}, \ell_{2} x_{2}, \dots, \ell_{n} x_{n})$ and
$
    \bar{\combset} = \{
        \bar{\bm{c}} \mid
        \bar{\bm{c}} = (c_{1}/x_{1}, c_{2}/x_{2}, \dots, c_{n}/x_{n}),
        \bm{c} = (c_{1}, c_{2}, \dots, c_{n}) \in \combset
    \}
$.
\begin{align}
    \label{eq:main-dual-03}
    \max_{\gamma, \bar{\bm{\ell}}} \; \gamma
    \qquad
    \text{s.t.} \quad
    &
    \bar{\bm{\ell}} \cdot \bar{\bm{c}} \geq \gamma,
    \quad \forall \bar{\bm{c}} \in \bar{\combset}, 
    \quad \bar{\bm{\ell}} \in \simplex{n}.
\end{align}
Obviously, Problem (\ref{eq:main-dual-03}) is 
equivalent to (\ref{eq:main-dual-02}).
Thus, the following sections focus on solving Problem (\ref{eq:main-dual-03}).




\subsection{The regularization}
Regularizing a smooth function to a convex objective function
does not make the objective smooth
while regularizing a strongly convex function
makes it strongly convex.
So, we use a strongly convex regularizer 
$
    \Delta(\bar{\bm{\ell}})
    = \sum_{i=1}^{n} \bar{\ell_{i}} \ln \frac{\bar{\ell_{i}}}{1/n}
$ to the objective function of~(\ref{eq:main-dual-03})
then consider the primal problem.
We solve Problem (\ref{eq:main-dual-03}) by ERLPBoost-based algorithm,
shown in Algorithm~\ref{alg:metarounding-erlpboost}.
Our algorithm aims to find an $\epsilon/2$-approximate solution
to the following problem.
\begin{align}
    \label{eq:main-dual-entropy}
    \max_{\gamma, \bm{\ell}} \; \gamma
    - \frac{1}{\eta}
    \Delta(\bar{\bm{\ell}})
    \qquad
    \text{s.t.} \quad
    \bar{\bm{\ell}} \cdot \bar{\bm{c}} \geq \gamma, \quad
    \forall \bar{\bm{c}} \in \bar{\combset},
    \quad \bar{\bm{\ell}} \in \simplex{n}.
\end{align}
Note that, by definition, $\Delta( \bar{\bm{\ell}} ) \leq \ln n$ holds
for all $\bar{\bm{\ell}} \in \simplex{n}$.

Let $I_{\simplex{n}}$ be the barrier function for $\simplex{n}$,
$J(\bm{\theta}) = \max_{\bm{c} \in \bar{\combset}} \theta_{\bm{c}}$
be the max function over $\real^{\bar{\combset}}$, and
let $H = I_{\simplex{n}} + \frac{1}{\eta} \Delta$.
With these notations, one can rewrite Problem (\ref{eq:main-dual-entropy}) as
\begin{align}
    \label{eq:main-dual-entropy-02}
    \max_{\bar{\bm{\ell}}} \;
        - J(- \bar{\bm{\ell}}^\top \bar{C})
        - I_{\simplex{n}}(\bar{\bm{\ell}})
        - \frac{1}{\eta} \Delta(\bar{\bm{\ell}})
    =: 
    \max_{\bar{\bm{\ell}}} \;
    - J(- \bar{\bm{\ell}}^\top \bar{C})
    - H(\bar{\bm{\ell}}),
\end{align}
where 
$
\bar{C} =
    \begin{bmatrix}
        \bar{\bm{c}}_1 & \bar{\bm{c}}_{2} & \dots
    \end{bmatrix}
    \in \real^{n \times \bar{\combset}}
$ is 
the matrix whose column vectors are elements of $\bar{\combset}$.
By Theorem~\ref{thm:strong_duality},
Problem (\ref{eq:main-dual-entropy-02}) is equivalent to the following
dual problem\footnote{%
    One can verify the equality %
    by setting $f = J^{\star}$, $g = H^{\star}$, and $A = \bar{C}^\top$. %
}.
\begin{align}
    \label{eq:main-primal-entropy}
    \min_{\bm{\lambda}} \;
    J^{\star}(\bm{\lambda}) + H^{\star}(\bar{C} \bm{\lambda})
    =
    \min_{\bm{\lambda} \in \simplex{\bar{\combset}}} \;
    H^{\star}(\bar{C} \bm{\lambda})
\end{align}
Since $H$ is $1/\eta$-strongly convex w.r.t. $L_1$-norm,
its conjugate $H^{\star}$ is $\eta$-smooth w.r.t. $L_\infty$-norm.
The objective function $H^{\star}$ is
a smoothified version of the one in Problem (\ref{eq:main-primal}).

The smoothness of the objective function is a key property for our analysis,
so we aim to solve Problem (\ref{eq:main-primal-entropy}) instead of
Problem (\ref{eq:main-primal}).
Algorithm~\ref{alg:metarounding-erlpboost} summarizes our method,
which maintains the quantity $\epsilon_{k}$
defined in Line~\ref{alg-line:epsilon}.
\begin{lemma}
    \label{lem:justification}
    Let $\bm{x} \in \relaxset{\combset}$ be 
    an input to Algorithm~\ref{alg:metarounding-erlpboost} and
    let $
        \epsilon_{k}
        = H^{\star}(\bar{C}^{(k)} \bm{\lambda})
        - \max_{j \in [k]} \bar{\bm{c}}_j \cdot \bar{\bm{\ell}}_{j-1}
    $ be the optimality gap 
    defined in Line~\ref{alg-line:epsilon}
    of Algorithm~\ref{alg:metarounding-erlpboost}.
    If $\eta \geq 2 \ln(n) / \epsilon$, 
    $\epsilon_{k} \leq \epsilon / 2$ implies 
    $
        \mathbb{E}_{\bm{c} \sim \bm{\lambda}_{k}} [ \bm{c} ] \cdot \bm{\ell}
        \leq (\alpha + \epsilon) \bm{x} \cdot \bm{\ell}
        $ for all $\bm{\ell} \in [0, M]^{n}$.
\end{lemma}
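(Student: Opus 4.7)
The plan is to combine the termination condition $\epsilon_{k} \leq \epsilon/2$ with the Fenchel--Young inequality, converting the scalar bound on the smoothed primal value $H^{\star}(C^{(k)} \boldsymbol{\lambda}_{k})$ into the desired pointwise linear inequality in $\boldsymbol{\ell}$.

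The first step is to cap the edge term by $\alpha$. Each $\boldsymbol{c}_{j}$ is produced by calling $\relaxalg(\boldsymbol{\ell}_{j-1})$, and since $\boldsymbol{x} \in \relaxset{\combset}$ while $\boldsymbol{\ell}_{j-1} \in \lossset$ (so in particular $\boldsymbol{\ell}_{j-1} \cdot \boldsymbol{x} = 1$), the relax-based guarantee yields $\boldsymbol{c}_{j} \cdot \boldsymbol{\ell}_{j-1} \leq \alpha \min_{\boldsymbol{p} \in \relaxset{\combset}} \boldsymbol{p} \cdot \boldsymbol{\ell}_{j-1} \leq \alpha\, \boldsymbol{x} \cdot \boldsymbol{\ell}_{j-1} = \alpha$. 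Maximizing over $j \in [k]$ and substituting into the termination inequality gives $H^{\star}(C^{(k)} \boldsymbol{\lambda}_{k}) \leq \alpha + \epsilon/2$.

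The second step is to apply Fenchel--Young to the conjugate pair $(H, H^{\star})$. For every $\boldsymbol{\ell} \in \lossset$, $I_{\lossset}(\boldsymbol{\ell}) = 0$ and hence $H(\boldsymbol{\ell}) = \tfrac{1}{\eta}\Delta(\boldsymbol{\ell})$, so $\boldsymbol{\ell} \cdot (C^{(k)} \boldsymbol{\lambda}_{k}) \leq H(\boldsymbol{\ell}) + H^{\star}(C^{(k)} \boldsymbol{\lambda}_{k}) \leq \tfrac{1}{\eta}\Delta(\boldsymbol{\ell}) + \alpha + \epsilon/2$. The paper has already noted that $\Delta(\boldsymbol{\ell}) \leq \ln n$ on $\lossset$ (setting $p_{i} := \ell_{i} x_{i}$ makes $\boldsymbol{p}$ a probability vector and $\Delta$ its KL divergence from the uniform distribution), so $\eta \geq 2 \ln(n)/\epsilon$ forces the regularizer term to be at most $\epsilon/2$. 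Combining, and recognizing that $C^{(k)} \boldsymbol{\lambda}_{k} = \mathbb{E}_{\boldsymbol{c} \sim \boldsymbol{\lambda}_{k}}[\boldsymbol{c}]$ since $\boldsymbol{\lambda}_{k}$ is a distribution over the oracle outputs $\boldsymbol{c}_{1}, \dots, \boldsymbol{c}_{k}$, gives $\mathbb{E}_{\boldsymbol{c} \sim \boldsymbol{\lambda}_{k}}[\boldsymbol{c}] \cdot \boldsymbol{\ell} \leq \alpha + \epsilon$. Using $\boldsymbol{\ell} \cdot \boldsymbol{x} = 1$ on $\lossset$ rewrites the right-hand side as $(\alpha + \epsilon) \boldsymbol{x} \cdot \boldsymbol{\ell}$, proving the lemma.

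The only step that requires any insight is recognizing that Fenchel--Young is precisely the right tool: it converts the single-scalar optimality bound on $H^{\star}(C^{(k)} \boldsymbol{\lambda}_{k})$ into the family of pointwise linear inequalities indexed by $\boldsymbol{\ell} \in \lossset$ that the $\alpha$-metarounding definition demands. The edge upper bound and the uniform bound $\Delta \leq \ln n$ are both routine once the reformulation of Section~4 has been set up; the $\eta = 2\ln(n)/\epsilon$ calibration is then what splits the budget $\epsilon$ evenly between the optimality gap and the regularization slack.
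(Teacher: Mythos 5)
Your proof is correct and takes essentially the same approach as the paper's: both bound the edge term by $\alpha$ via the relax-based approximation guarantee, bound the regularizer by $\ln n / \eta \leq \epsilon/2$, and convert the scalar bound on $H^{\star}(C^{(k)}\bm{\lambda}_k)$ into the pointwise family of inequalities over $\bm{\ell} \in \lossset$. The only cosmetic difference is that you invoke Fenchel--Young by name, while the paper unrolls the supremum definition of $H^{\star}$ at the maximizer $\bm{\ell}_k$; these are the same inequality.
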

\begin{proof}
    Recall that
    $
        \bar{\bm{\ell}}_{k}
        = \nabla H^\star (\bar{C}^{(k)} \bm{\lambda}_{k})
        = \arg \max_{\bar{\bm{\ell}} \in \simplex{n}}
        \bar{\bm{\ell}}^{\top} \bar{C}^{(k)} \bm{\lambda}_{k}
        - \frac{1}{\eta} \Delta (\bar{\bm{\ell}}).
    $
    By definition of $H^{\star}$, 
    the following holds for all $\bm{\ell} \in [0, M]^{n}$.
    \begin{align*}
        H^{\star}(\bar{C}^{(k)} \bm{\lambda}_{k})
        = \; & \bar{\bm{\ell}}_{k}^{\top} \bar{C}^{(k)} \bm{\lambda}_{k}
        - \frac{1}{\eta} \Delta(\bar{\bm{\ell}}_{k}) \\
        \geq \; & \bar{\bm{\ell}}^\top \bar{C}^{(k)} \bm{\lambda}_{k}
            - \frac{1}{\eta} \Delta (\bar{\bm{\ell}}) 
            \geq \bm{\ell}^\top C^{(k)} \bm{\lambda}_{k}
            - \frac{1}{\eta} \ln n 
        \geq 
            \mathbb{E}_{\bm{c} \sim \bm{\lambda}_{k}}
            \left[ \bm{\ell} \cdot \bm{c} \right]
            - \frac{\epsilon}{2}.
    \end{align*}
    On the other hand,
    by the assumption of the $\relaxalg$,
    we have
    \begin{align*}
        \max_{j \in [k]} \bar{\bm{\ell}}_{j} \cdot \bar{\bm{c}}_{j+1}
        = \max_{j \in [k]} \bm{\ell}_{j} \cdot \bm{c}_{j+1}
        \leq & \max_{j \in [k]}
            \alpha \min_{\bm{p} \in \relaxset{\combset}}
            \bm{\ell}_{j} \cdot \bm{p} \\
        \leq & \alpha \max_{j \in [k]} \bm{\ell}_{j} \cdot \bm{x}
        = \alpha \max_{j \in [k]} \|\bar{\bm{\ell}}_{j}\|_{1} = \alpha,
    \end{align*}
    where the second inequality comes from 
    the fact that $\bm{x} \in \relaxset{\combset}$.
    Combining the above relations, we have
    \[
        \epsilon_{k}
        = H^\star( \bar{C}^{(k)} \bm{\lambda}_{k} )
        - \max_{j \in [k]} \bar{\bm{c}}_{j+1} \cdot \bar{\bm{\ell}}_{j}
        \geq \mathbb{E}_{\bm{c} \sim \bm{\lambda}_{k}}
        \left[ \bm{\ell} \cdot \bm{c} \right]
        - \frac{\epsilon}{2}
        - \alpha, \qquad \forall \bm{\ell} \in [0, M]^{n}.
    \]
    Thus, $\epsilon_{k} \leq \epsilon/2$ implies 
    $
        \mathbb{E}_{\bm{c} \sim \bm{\lambda}_{k}}
        \left[ \bm{\ell} \cdot \bm{c} \right]
        \leq \alpha + \epsilon
        = (\alpha + \epsilon) \| \bar{\bm{\ell}} \|_{1}
        = (\alpha + \epsilon) \bm{x} \cdot \bm{\ell}
    $
    for all $\bm{\ell} \in [0, M]^{n}$,
    which is the relation we desired.
\end{proof}
Lemma~\ref{lem:justification} assures that
the output $\bm{\lambda}^{\star}$ of
Algorithm~\ref{alg:metarounding-erlpboost}
is an $\epsilon$-approximate solution for Problem (\ref{eq:main-primal}).
Thus, we can focus on proving how quickly the optimality gap converges.
The following lemma shows the convergence rate.
\begin{lemma}
    \label{lem:recurrence}
    For all $\{ \mu_{k} \}_{k \geq 1} \subset [0, 1]$,
    the sequence $\{ \epsilon_{k} \}_{k \geq 1}$ 
    generated by Algorithm~\ref{alg:metarounding-erlpboost} satisfies
    the following relation.
    \begin{align}
        \label{eq:recurrence}
        \forall k \geq 1, \quad
        \epsilon_{k+1} \leq \left(1 - \mu_{k} \right) \epsilon_{k}
        + 2 \eta \mu_{k}^{2} M^{2} \diamcomb^{2}.
    \end{align}
\end{lemma}
\begin{proof}
    First of all, $\bm{\lambda}_{k+1}$ is chosen to optimize
    \begin{align*}
        \min_{\bm{\lambda} \in \simplex{k+1}}
        H^{\star} ( \bar{C}^{(k+1)} \bm{\lambda} ),
        \qquad
        \bar{C}^{(k+1)} = \begin{bmatrix}
            \bar{\bm{c}}_{1} & \bar{\bm{c}}_{2} & \dots & \bar{\bm{c}}_{k+1}
        \end{bmatrix}
        \in [0, M\diamcomb]^{n \times (k+1)},
    \end{align*}
    so that
    $
        H^{\star}(\bar{C}^{(k+1)} \bm{\lambda}_{k+1})
        \leq H^{\star}(\bar{C}^{(k+1)} \bm{\lambda})
    $ holds
    for all interior points $
        \bm{\lambda}
        = \bm{\lambda}_{k} + \mu_{k} \left(
            \bm{e}_{\bar{\bm{c}}_{k+1}} - \bm{\lambda}_{k}
        \right)
    $ with $\mu_{k} \in [0, 1]$.
    Thus, using the $\eta$-smoothness of $H^{\star}$,
    \begin{align*}
        \epsilon_{k} - \epsilon_{k+1}
        \geq \; &
        H^{\star} (\bar{C}^{(k)} \bm{\lambda}_{k})
        - H^{\star} (\bar{C}^{(k+1)} \bm{\lambda}) \\
        \geq \; &
            - \mu_{k}
            \nabla H^{\star} (\bar{C}^{(k+1)} \bm{\lambda}_{k})^{\top}
            \bar{C}^{(k+1)} \left( \bm{e}_{\bar{\bm{c}}_{k+1}} - \bm{\lambda}_{k} \right) \\
                &
            - \frac{\eta}{2} \mu_{k}^{2}
            \|
                \bar{C}^{(k+1)} (\bm{e}_{\bar{\bm{c}}_{k+1}} - \bm{\lambda}_{k})
            \|_{\infty}^{2} \\
        \geq \; &
            - \mu_{k}
            \bar{\bm{\ell}}_{k}
            \bar{C}^{(k+1)} \left( \bm{e}_{\bar{\bm{c}}_{k+1}} - \bm{\lambda}_{k} \right)
            - 2 \eta \mu_{k}^{2} M^{2} \diamcomb^{2} \\
        = \; &
            - \mu_{k} \bar{\bm{\ell}}_{k} \cdot \bar{\bm{c}}_{k+1}
            + \mu_{k} \bar{\bm{\ell}}_{k}^{\top} \bar{C}^{(k)} \bm{\lambda}_{k}
            - 2 \eta \mu_{k}^{2} M^{2} \diamcomb^{2} \\
        \geq \; &
            - \mu_{k} \bar{\bm{\ell}}_{k} \cdot \bar{\bm{c}}_{k+1}
            + \mu_{k} \left[
                \bar{\bm{\ell}}_{k}^{\top} \bar{C}^{(k)} \bm{\lambda}_{k}
                - \frac{1}{\eta} \Delta(\bar{\bm{\ell}}_{k})
            \right] 
            - 2 \eta \mu_{k}^{2} M^{2} \diamcomb^{2} \\
        = \; &
            - \mu_{k}
            \bar{\bm{\ell}}_{k} \cdot \bar{\bm{c}}_{k+1}
            + \mu_{k} H^{\star}( \bar{C}^{(k)} \bm{\lambda}_{k} )
            - 2 \eta \mu_{k}^{2} M^{2} \diamcomb^{2} \\
        \geq \; & \mu_{k} \epsilon_{k} - 2 \eta \mu_{k}^{2} M^{2} \diamcomb^{2}.
    \end{align*}
    Rearraging terms, we get the recurrence relation.
\end{proof}
\begin{theorem}
    \label{thm:convergence-rate}
    Algorithm~\ref{alg:metarounding-erlpboost} terminates
    in $
    K = O\left( M^{2}\diamcomb^{2}\ln (n) / \epsilon^{2} \right)
    $ iterations.
\end{theorem}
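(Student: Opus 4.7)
The plan is to apply the classical Frank-Wolfe convergence argument to the recurrence
\[
    \epsilon_{k+1} \leq (1-\mu_k)\epsilon_k + 2\eta\mu_k^2\diamcomb^2
\]
from Lemma~\ref{lem:recurrence}, which is valid for any sequence $\{\mu_k\}\subset[0,1]$ because the fully-corrective update $\bm{\lambda}_{k+1} = \arg\min_{\bm{\lambda}\in\simplex{k+1}} H^\star(C^{(k+1)}\bm{\lambda})$ is at least as good as any single convex-combination step. Once the standard $O(\eta\diamcomb^2/K)$ rate is in hand, substituting $\eta = 2\ln(n)/\epsilon$ and imposing the termination threshold $\epsilon_K\leq\epsilon/2$ immediately forces $K = O(\diamcomb^2\ln(n)/\epsilon^2)$.

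Concretely, I would use the canonical FW schedule $\mu_k = 2/(k+1)$. The choice $\mu_1 = 1$ gives $\epsilon_2 \leq 2\eta\diamcomb^2$ regardless of the ill-defined $\epsilon_1$, which serves as the base case. For the inductive step, plugging $\mu_k = 2/(k+1)$ and the hypothesis $\epsilon_k \leq 8\eta\diamcomb^2/k$ into the recurrence and simplifying yields $\epsilon_{k+1} \leq 8\eta\diamcomb^2/(k+1)$ via the elementary inequality $k^2+k-1 \leq k^2+k$. Substituting $\eta = 2\ln(n)/\epsilon$ and solving $8\eta\diamcomb^2/K \leq \epsilon/2$ completes the count, and Lemma~\ref{lem:justification} certifies that the returned $\bm{\lambda}^\star$ is a valid $(\alpha+\epsilon)$-metarounding.

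The main obstacle is the base case: the algorithm initializes with $H^\star(C^{(0)}\bm{\lambda}_0) = \infty$, so $\epsilon_1$ is not a priori controlled by $\eta$ and $\diamcomb$. The $\mu_1 = 1$ step, permitted by Lemma~\ref{lem:recurrence}, collapses this dependence in a single iteration and lets the textbook FW induction take over. Beyond that, the analysis is entirely standard and requires no additional structural facts about $\combset$ or $\relaxset{\combset}$ than what Lemma~\ref{lem:recurrence} already encodes.
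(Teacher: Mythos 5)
Your proof is correct and follows essentially the same route as the paper: apply the Frank--Wolfe recurrence of Lemma~\ref{lem:recurrence} with a $\Theta(1/k)$ step schedule, prove $\epsilon_k = O(\eta\diamcomb^2/k)$ by induction, and substitute $\eta = 2\ln(n)/\epsilon$. The only cosmetic difference is the schedule ($\mu_k = 2/(k+1)$ versus the paper's $2/(k+2)$); your base-case handling via $\mu_1 = 1$ to get $\epsilon_2 \leq 2\eta\diamcomb^2$ is if anything slightly cleaner, since it invokes the recurrence only at $k\geq 1$, the range for which Lemma~\ref{lem:recurrence} is stated, whereas the paper formally applies it at $k=0$ where $\epsilon_0$ is infinite.
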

\begin{proof}
    Since Lemma~\ref{lem:recurrence} does not specify the sequence
    $\{\mu_{k}\}_{k \geq 1} \subset [0, 1]$,
    we choose $\mu_{k} = \frac{2}{k+2}$.
    Now we will show the following by induction on $k$.
    \begin{align}
        \label{eq:induction-hypothesis}
        \epsilon_{k} \leq \frac{8 \eta M^{2} D_{\infty}^{2}}{k + 2}
    \end{align}
    For the base case $k = 1$, $
    \epsilon_{1} \leq \left(1 - \frac{2}{0 + 2}\right) \epsilon_{0}
        + 2 \eta \left( \frac{2}{0 + 2} \right)^{2} M^{2} \diamcomb^{2}
        \leq \frac{8 \eta M^{2} \diamcomb^{2}}{1 + 2}
    $ holds. Thus Eq. (\ref{eq:induction-hypothesis}) holds for $k = 1$.
    Now, assume that Eq. (\ref{eq:induction-hypothesis}) holds for $k \geq 1$.
    By the recurrence formula (\ref{eq:recurrence}) and
    the induction hypothesis, we have
    \begin{align*}
        \epsilon_{k+1}
        \leq & \left(1 - \frac{2}{k+2}\right)
        \frac{8 \eta M^{2} \diamcomb^{2}}{k + 2}
        + 2 \eta \left( \frac{2}{k+2} \right)^{2} M^{2} \diamcomb^{2} \\
        = & \frac{k}{k+2} \frac{8 \eta M^{2} \diamcomb^{2}}{k + 2}
        + 2 \eta \left( \frac{2}{k+2} \right)^{2} M^{2} \diamcomb^{2} \\
        = & \frac{8 \eta M^{2} \diamcomb^{2}}{k+2} \left[
            \frac{k}{k+2} + \frac{1}{k+2}
        \right]
        \leq \frac{8 \eta M^{2} \diamcomb^{2}}{k+2} \frac{k+2}{k+3}
        = \frac{8 \eta M^{2} \diamcomb^{2}}{k+3}
    \end{align*}
    Thus, Ineq. (\ref{eq:induction-hypothesis}) holds for all $k$.

    Setting the right-hand-side of (\ref{eq:induction-hypothesis}) 
    to be at most $\epsilon / 2$,
    we $\epsilon_{K} \leq \epsilon / 2$ is achieved 
    after $
        K \geq 16 \eta M^{2} \diamcomb^{2}/\epsilon
        \geq 32M^{2} \diamcomb^{2} \ln (n) / \epsilon^{2}
    $ iterations.
\end{proof}
By Theorem~\ref{thm:convergence-rate},
$\epsilon_{K} \leq \frac{\epsilon}{2}$ holds
after $K = O\left(M^{2} \diamcomb^{2}\ln (n) / \epsilon^{2}\right)$ iterations.
Further, Lemma~\ref{lem:justification} guarantees
that the output $\bm{\lambda}_{K} \in \simplex{\combset}$ satisfies
$
    \mathbb{E}_{\bm{c} \sim \bm{\lambda}} [ \bm{c} ] \cdot \bm{\ell}
    \leq (\alpha + \epsilon) \bm{x} \cdot \bm{\ell}
    $ for all $\bm{\ell} \in [0, M]^{n}$.
Therefore, Algorithm~\ref{alg:metarounding-erlpboost} is
a matarounding algorithm.
We remark that if $\bm{x} = \bm{1}$, 
the convergence rate is the same as in ERLPBoost.
In this sense, our algorithm can be seen as a generalized version of ERLPBoost.
Furthermore, our analysis uses the Frank-Wolfe-like analysis, 
significantly simplifying the one ERLPBoost.

    \section{Experiment}
\begin{figure}[t]
    \centering
    \begin{minipage}[t]{.49\hsize}
        \centering
        \includegraphics[keepaspectratio,scale=.30]{./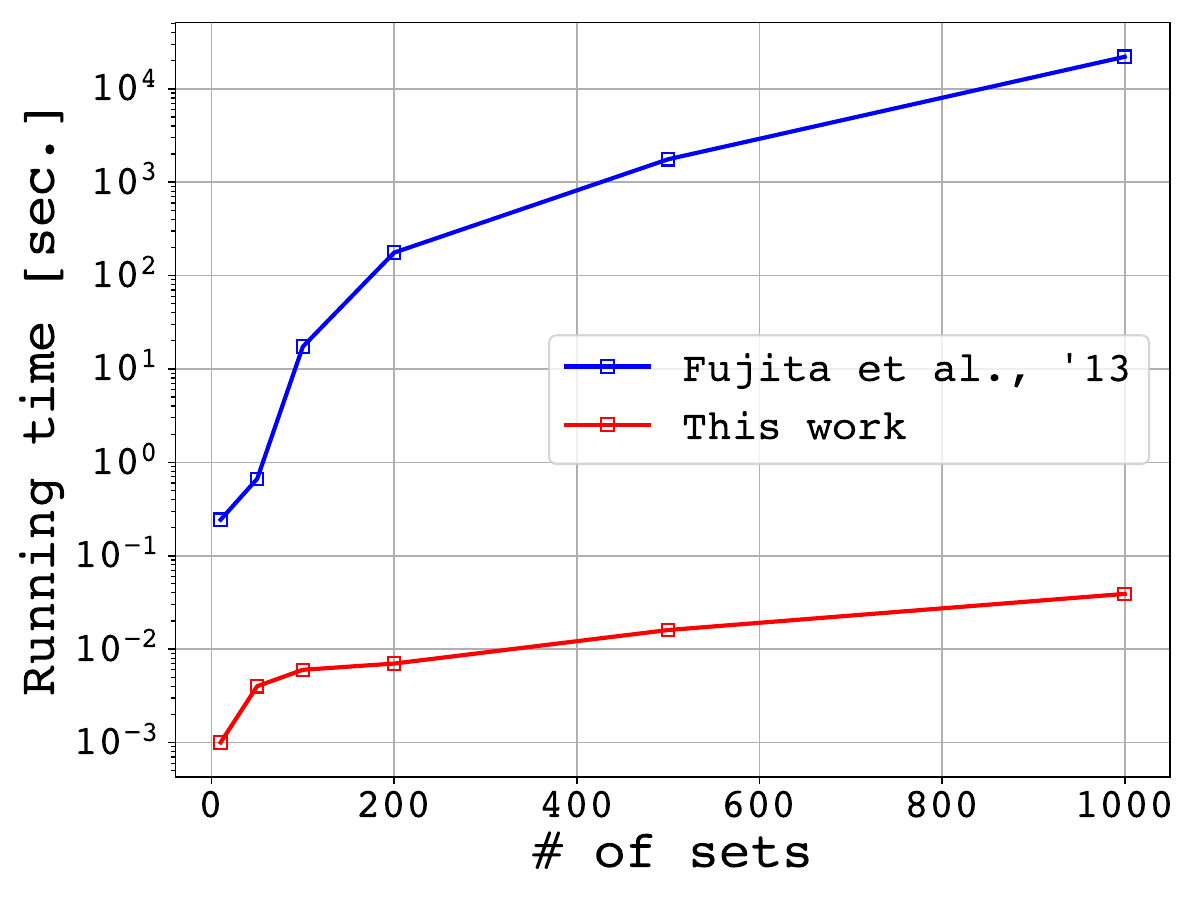}
    \end{minipage}
    \begin{minipage}[t]{.49\hsize}
        \centering
        \includegraphics[keepaspectratio,scale=.30]{./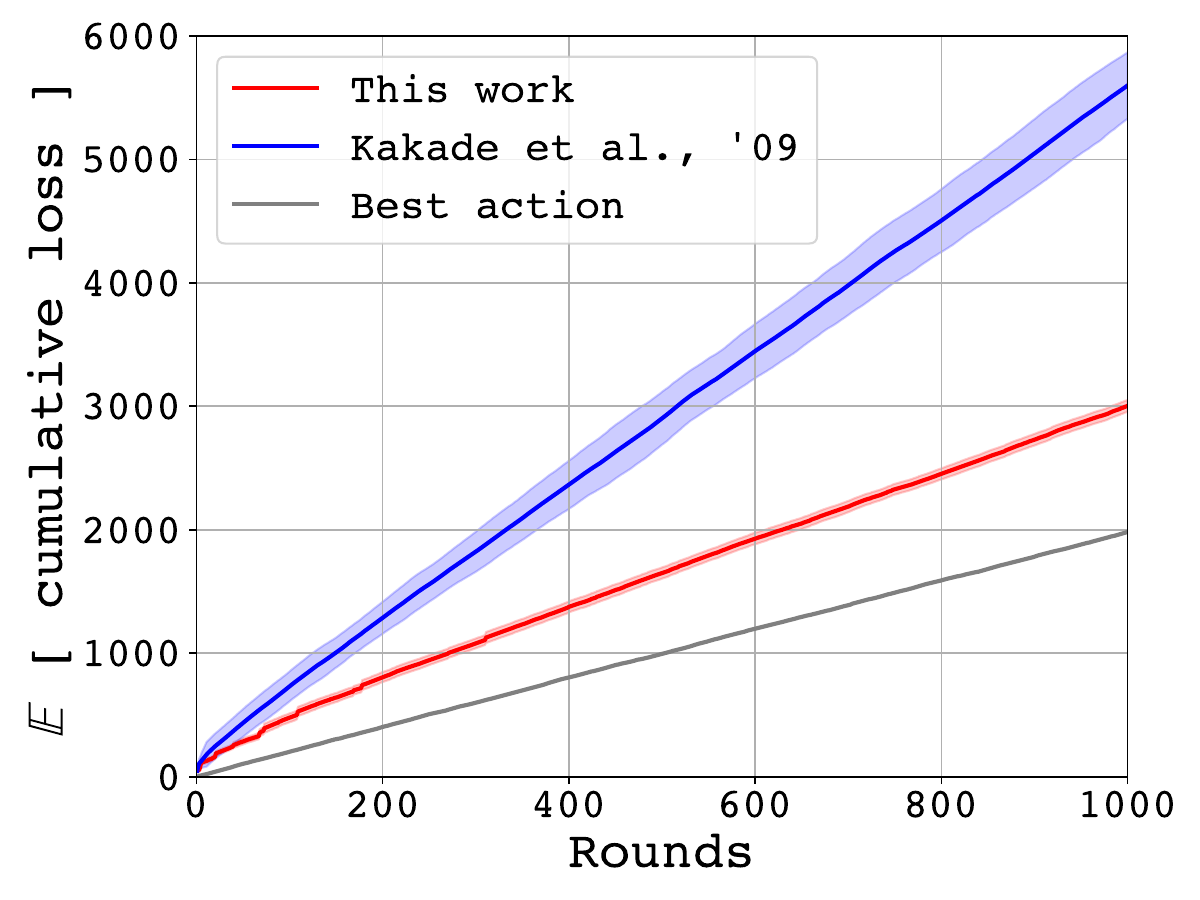}
    \end{minipage}
    \caption{%
        Comparison of previous works and our work. %
        Both figure uses combinatorial vector of dimension $n = 20$.
        Left: comparison of running time. %
        Right: comparison of cumulative losses. %
        KKL '09 is the algorithm %
        proposed by Kakade et al., 2009~\cite{kakade+:sicomp09}. %
        In this experiment, we set $|\combset| = 100$.
    }
    \label{fig:experiments}
\end{figure}
The algorithm proposed by Garber~\cite{garber:mor20}
requires solving large-scale quadratic programs
with $O(n^{2} \sqrt{T} \ln T)$ variables 
for $O(\sqrt{T})$-regret and $O(n^{2} \ln T)$ variables
for $O(T^{2/3})$-regret.
Therefore, one cannot compute a combinatorial vector
even though the number of oracle calls is sub-linear in $T$.
For this reason, we compared our algorithm
against to the algorithm by Kakade et al.~\cite{kakade+:sicomp09}
that achieves $O(\sqrt{T})$-regret\footnote{
    The program is available %
    at \url{https://github.com/rmitsuboshi/metarounding_mitsuboshi}.%
}.
We performed our experiment on Intel Xeon Gold 6124 CPU 2.60GHz processors.

Our experiment was demonstrated on a set cover instance,
generated with the same procedure in~\cite{fujita+:alt13}.
We choose a fixed item size $m = 10$ and
measure the running time by varying the number of sets
$n \in \{10, 50, 100, 200, 500, 1000\}$.
The input for metarounding is a solution of relaxed set cover problem
with a random cost vector $\bm{x} \in [0, 1]^{n}$.
Here, the relaxation means that the variable takes value in $[0, 1]$
instead of $\{0, 1\}$.

\emph{Running time.}
We compared the algorithm proposed by Fujita et al.~\cite{fujita+:alt13} 
and our algorithm.
We use the Gurobi optimizer 9.0.1\footnote{\url{https://www.gurobi.com/}}
to solve the sub-problems.
To solve the entropy minimization, 
we used the sequential quadratic minimization approach;
it first minimizes the quadratic approximation of the objective,
then the objective function is updated to the approximation 
around the optimal solution. This procedure repeats until convergent.
As shown in the left side of Figure~\ref{fig:experiments},
our algorithm is extremely fast compared to the others.

\emph{Cumulative loss.}
We compared the algorithm proposed by Kakade et al.~\cite{kakade+:sicomp09}
and our algorithm.
In each round, the loss vector is generated randomly.
Our algorithm uses 
the classic online gradient descent algorithm~\cite{hazan:mit19}
over $[0, 1]^n$ as the online linear optimization.
The algorithm achieves $R_{T}(1) = O(\sqrt{T})$ over $\relaxset{\combset}$,
so combining it with our metarounding achieves $R_{T}(\alpha) = O(\sqrt{T})$.
As in the right side of Figure~\ref{fig:experiments},
our algorithm is empirically superior to the other.

    \section{Conclusion}
In this paper,
we proposed a metarounding algorithm
that finds an $\epsilon$-approximate solution 
in $O( \diamcomb^{2} \ln(n) / \epsilon^{2} )$ rounds.
Even in the numerical experiments,
the proposed algorithm is faster than previous works.

    \section*{Acknowledgements}
The work was supported by JSPS KAKENHI Grant Numbers JP19H014174 and JP19H04067, respectively.

    \bibliographystyle{splncs04}
    \bibliography{reference,hatano}
\end{document}